\def\svgwidth{\linewidth}
\theoremstyle{plain} 
\newtheorem{theorem}{Theorem} 
\newtheorem{lemma}{Lemma} 
\newtheorem{remark}{Remark}
\newcommand{\eg}{e.\,g.,\,}
\newcommand{\ie}{i.\,e.,\,}
\newcommand*{\tr}{^{\top}}
\newcommand*{\R}{\mathbb{R}}
\newcommand*{\IW}{\mathcal{IW}}
\newcommand*{\T}{\mathcal{T}}
\newcommand*{\N}{\mathcal{N}}
\newcommand{\bi}[1]{\boldsymbol{#1}}
\newcommand\norm[1]{\left\lVert#1\right\rVert}
\DeclareMathOperator*{\argmin}{arg\,min}
\DeclareMathOperator{\prob}{\mathit{p}\!}
\DeclareMathOperator{\spanop}{span\!}
\newcommand{\spans}[1]{\spanop\left( #1 \right)}
\newcommand\copyrighttext{%
	\footnotesize Accepted for presentation at 2024 Conf. on Decision and Control. \copyright 2024 IEEE.  Personal use of this material is permitted. Permission from IEEE must be obtained for all other uses, in any current or future media, including reprinting/republishing this material for advertising or promotional purposes, creating new collective works, for resale or redistribution to servers or lists, or reuse of any copyrighted component of this work in other works.}
\newcommand\copyrightnotice{%
	\begin{tikzpicture}[remember picture,overlay]
		\node[anchor=north,yshift=-5mm] at (current page.north) {\fbox{\parbox{\dimexpr\textwidth-\fboxsep-\fboxrule\relax}{\copyrighttext}}};
	\end{tikzpicture}%
}
\title{\LARGE \bf
	Efficient Online Inference and Learning in Partially Known Nonlinear State-Space Models by Learning Expressive Degrees of Freedom Offline
}
\author{Jan-Hendrik Ewering, Bj\"orn Volkmann, Simon F.\,G. Ehlers, Thomas Seel, and Michael Meindl
	\thanks{The Authors are with the Institute of Mechatronic Systems, Leibniz Universit\"at Hannover, 30167~Hanover, Germany (e-mail: \href{mailto:jan-hendrik.ewering@imes.uni-hannover.de}{{\tt\small jan-hendrik.ewering@imes.uni-hannover.de}}).}%
}
\begin{document}

	\maketitle\copyrightnotice\vspace{-9.5pt}
	\thispagestyle{empty}
	\pagestyle{empty}

	\begin{abstract}
		
		Intelligent real-world systems critically depend on expressive information about their system state and changing operation conditions, \eg due to variation in temperature, location, wear, or aging. 
		To provide this information, online inference and learning attempts to perform state estimation and (partial) system identification simultaneously. Current works combine tailored estimation schemes with flexible learning-based models but suffer from convergence problems and computational complexity due to many degrees of freedom in the inference problem (\ie parameters to determine). To resolve these issues, we propose a procedure for data-driven offline conditioning of a highly flexible \ac{gp} formulation such that \textit{online} learning is restricted to a subspace, spanned by expressive basis functions.  
		Due to the simplicity of the transformed problem, a standard particle filter can be employed for Bayesian inference. In contrast to most existing works, the proposed method enables online learning of target functions that are nested nonlinearly inside a first-principles model. Moreover, we provide a theoretical quantification of the error, introduced by restricting learning to a subspace. 
		A Monte-Carlo simulation study with a nonlinear battery model shows that the proposed approach enables rapid convergence with significantly fewer particles compared to a baseline and a state-of-the-art method.
	\end{abstract}
	
%
	
	\acresetall
	\section{INTRODUCTION}
	Operation under complex and changing conditions is a key challenge in modern control research. The changing conditions can be attributed to intrinsic system behavior (\eg wear and friction in machines \cite{Ishiyama.2023}, aging of batteries \cite{Aitio.2023}) or environment interaction (\eg unknown environment map \cite{Kok.2024,Viset.2022}, changing tire-road friction \cite{Berntorp.2021,Berntorp.2022,Lampe.2023}), see Fig.~\ref{fig:BasisFunctionComparisson}. In both cases, information on the underlying change is crucial to ensure adaptive and reliable operation. In this light, estimation algorithms fuse assumptions about the system structure with (limited) sensor data to obtain estimates of (latent) system states and varying parameters.\\
	Usually, an approximate system model can be derived from first principles, such as rigid body dynamics. However, knowledge about other aspects influencing the system is often limited, \eg friction effects, or environment maps.\\
	To address this, offline algorithms for joint inference of latent system states and learning of (partially) unknown models have been proposed \cite{Svensson.2016,Svensson.2017,BuissonFenet.2023b}. In \cite{Svensson.2016,Svensson.2017}, for instance, learning is facilitated through encoding basic assumptions about smoothness and dimensionality of the underlying true system behavior using a \ac{gp} prior \cite{Solin.2020}. For inference and learning, the authors employ particle Markov chain Monte Carlo methods \cite{Andrieu.2010}. However, such algorithms are often computationally demanding and rely on offline data to train highly flexible learning-based models, which hinders real-time adaptivity to changing conditions.\\
	In contrast, recent estimation algorithms attempt to learn (partially) unknown system behavior online while simultaneously inferring latent states. In \cite{Berntorp.2021}, Bayesian online learning of \ac{gp} state-space models is proposed based on \cite{Svensson.2016,Svensson.2017} and using a carefully designed \ac{smc} algorithm. However, high-dimensional search spaces can still render the estimation problem infeasible due to (i) convergence issues associated with complex posterior probability densities, and (ii) computational complexity. The associated challenges are commonly termed ``curse of dimensionality'', especially in the context of \ac{smc} methods.
	\begin{figure}[t]
		\centering
		\fontsize{8pt}{10pt}\selectfont 
		\import{figures/}{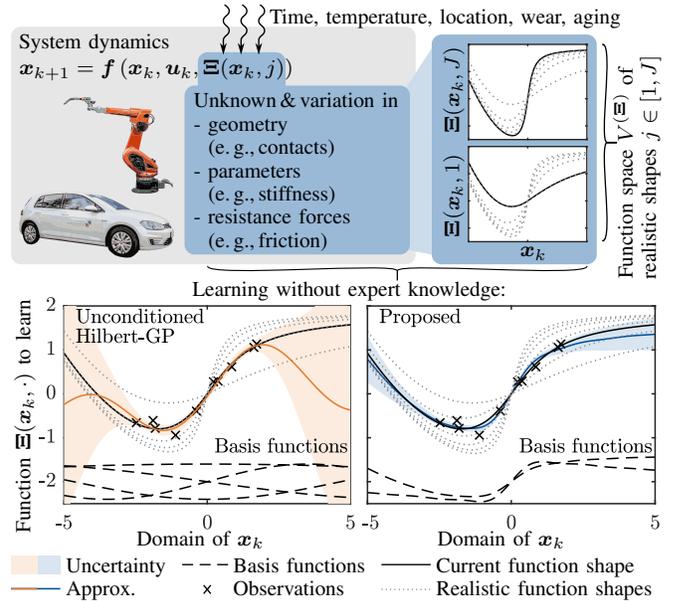}
		\normalsize 
		\caption{Unknown and varying effects $\bi{\Xi}$ in real-world systems that complicate operation (top) and approaches for learning of the underlying relationships (bottom). Without expert knowledge, online learning is difficult with state-of-the-art methods, \eg Hilbert-\ac{gp} \cite{Solin.2020}, due to many inexpressive degrees of freedom (here: basis functions). In contrast, the proposed method enables efficient online learning by \textit{data-driven} construction of few expressive basis functions.}
		\label{fig:BasisFunctionComparisson}
	\end{figure}\\
	To resolve these issues, a key approach is restricting learning to expressive \acp{dof}. In particular, what we mean with ``expressive \ac{dof}'' is that the model structure should exhibit few adjustable parameters, each of which has a unique and significant impact on the considered target, \eg the function shape to be learned.\\
	In estimation settings with purely physics-based models, this would be the derivation of empirical expert models in which only few parameters need to be determined. A common example is to represent the tire-road friction using the magic formula tire model \cite{Pacejka.1992} and to employ it for automotive estimation, \eg using unscented Kalman filters \cite{Lampe.2023}.\\
	On the other hand, learning-based models can be constrained to facilitate adaptivity in online settings. In \cite{Berntorp.2022}, expert knowledge is exploited in the form of symmetry and linear operator constraints to restrict flexibility. While the approaches in \cite{Berntorp.2022} are important contributions to simplify the learning problem, they are still limited to specific cases, and expert knowledge may not be available. The authors of \cite{Kullberg.2021c} solve a high-dimensional inference task by restricting online learning to a region around the current operation point. However, the function shape itself is not constrained, resulting in erroneous target function estimates. A notable contribution for learning of a low-dimensional, yet flexible \ac{gp} model is presented in \cite{Menzen.2023}. The authors connect a tractable tensor network with a linear-in-the-parameters \ac{gp} formulation \cite{Solin.2020} to enable inference in a low-dimensional subspace, but the advantages are not exploited for online inference and learning.\\
	In the vast majority of previous publications, the target function is learned either with overly flexible models (\ie too many \acp{dof}) or employing system-specific expert knowledge. Unlike all prior work, we propose to learn the most significant features of observed function realizations from data offline and to use them as expressive \acp{dof} for efficient Bayesian online inference and learning. By doing this, we restrict online learning to a low-dimensional subspace that spans only realistic function shapes, yielding fast convergence 
	and reduced computational burden,
	while employing a standard \ac{pf}. Due to the simplicity of the inference problem and in contrast to most previous work, the proposed methods are capable of learning functions online that are nested inside nonlinear first-principles models, without expert knowledge about the target functions.\\ 
	The paper is structured as follows. First, the problem is formalized in Sec.~\ref{sec:problem}. The proposed method for efficient online inference and learning is presented in Sec.~\ref{sec:methods}. Last, we illustrate our findings with a simulation example and draw conclusions in Sec.~\ref{sec:results} and \ref{sec:conclusion}, respectively.
	
	\textit{Notation}: For a vector $\bi{e} \in \R^{n_e}$, $\bi{e} \sim \N\left(\bi{\mu},\bi{\Sigma}\right)$ denotes a draw from a multivariate Normal with mean $\bi{\mu}$ and covariance $\bi{\Sigma}$ and $e_i$ is its $i$-th element. We use column vectors if not stated explicitly otherwise. A matrix $\bi{A}$ is written in bold and capital and has elements $a_{ij}$ for row $i$ and column $j$. The identity matrix of dimension $n$ is $\mathbf{I}_n$. We write the conditional density of a state sequence $\bi{x}_{1:k} := \{ \bi{x}_i \}_{i=1}^k$ from time steps $1$ to $k$, given the measurements $\bi{y}_{1:k}$ as $\prob(\bi{x}_{1:k}|\bi{y}_{1:k})$. By writing $V^{(\phi,N)} := \spans{\phi_1(\bi{x}),\hdots,\phi_N(\bi{x})}$, we refer to the space of functions that can be represented by linear combinations of basis functions  $\{\phi_i(\bi{x})\}_{i=1}^{N}$, defined on the domain $\bi{x} \in \Omega$. The inverse Wishart distribution with scale matrix $\bi{\Lambda}$ and \ac{dof} $\nu$ is $\IW(\nu,\bi{\Lambda})$. The multivariate Student-t distribution is $\T\left(\nu, \bi{\mu},\bi{\Lambda}\right)$. The Dirac delta mass $\delta_{ij} = 1$ for $i=j$ and $0$ otherwise. By writing $\norm{\bi{x}}_{\bi{M}}^2$, we mean $\bi{x}\tr \bi{M} \bi{x}$ and $\norm{\bi{x}}$ denotes the L$2$-norm of $\bi{x}$. 
	
	\section{PROBLEM FORMULATION}\label{sec:problem}
	The objective is to learn the nested and changing system behavior \textit{online} from noisy input-output data while using first-principles model knowledge. 
	In a probabilistic discrete-time state-space model
	\begin{subequations}\label{eq:model_structure}
		\begin{align}
			{\bi{x}}_{k+1} &  = \overbrace{\bi{f} \left(\bi{x}_{k}, \bi{u}_{k}, \bi{\Xi}(\bi{x}_{k}, j) \right)}^{\substack{ \text{\tiny Dominating first-principles model} \\ \text{\tiny \& nested unknown effects to be learned}}}  +  \bi{e}_{k}^{x}, \\
			\bi{y}_{k} & =\bi{{h}}(\bi{x}_{k},\bi{u}_{k}) + \bi{e}_{k}^{y},
		\end{align}
	\end{subequations}
	this amounts to learning the function $\bi{\Xi}:\R^{n_x + 1} \rightarrow \R^{n_{\xi}}$ and inferring the latent states $\bi{x}_k \in \R^{n_x}$ at time step $k$. The variation of $\bi{\Xi}(\cdot,j)$ in a set of typical shapes, \ie the function space $V^{(\bi{\Xi})}$, is described by a scheduling variable $j \in \left[1, J\right]$ (see Fig.~\ref{fig:BasisFunctionComparisson}). Please note that the scheduling variable $j$ is introduced for notational convenience and is not assumed to be known for online inference and learning. The case of $\bi{\Xi}$ being nested in $\bi{h}$ is conceptually similar and will not be considered explicitly. In \eqref{eq:model_structure}, the inputs are $\bi{u}_k \in \R^{n_u}$, and the measurements $\bi{y}_k \in \R^{n_y}$. The process noise $\bi{e}_k^{x}$ and the measurement noise $\bi{e}_k^{y}$ are zero-mean Gaussian random variables with known covariance matrices $\bi{Q}$ and $\bi{R}$, \ie $\bi{e}_k^{x} \sim \N\left(\bi{0},\bi{Q}\right)$ and $\bi{e}_k^{y} \sim \N\left(\bi{0},\bi{R}\right)$. The state dynamics $\bi{f}:\R^{n_x} \times \R^{n_u} \times \R^{n_{\xi}} \rightarrow \R^{n_x}$ and the measurement function $\bi{h}:\R^{n_x} \times \R^{n_u} \rightarrow \R^{n_y}$ are known from first-principles.\\
	For learning of $\bi{\Xi}$, a generic approximation model (\eg a \ac{gp}) is required to impose ``artificial structure''. To enable efficient inference and learning, the approximation should (i) simplify the estimation problem, and (ii) be computationally efficient. We assume that we do not have system-specific expert knowledge about the function structure of $\bi{\Xi}$ and its variation in $V^{(\bi{\Xi})}$. However, we are given an \textit{offline} data set $\mathcal{D} = \{\bi{\xi}_{1:K}^{j},\bi{x}_{1:K}^{j} \}_{j=1}^{J}$ with noisy observations $\bi{\xi}_k = \bi{\Xi}(\bi{x}_k,j) + \bi{e}_k^{\xi}$, $\bi{e}_k^{\xi} \sim \N (\bi{0}, \sigma_{\xi}\mathbf{I}_{n_\xi})$, that stem from $J$ realizations in the whole range $j \in \left[1, J \right]$. Note that measuring quantities of interest \textit{offline} under laboratory conditions and estimating these quantities \textit{online} in operation is common practice in many applications. Further, established methods could be used to infer $\mathcal{D}$, leveraging a system structure that is affine in $\bi{\Xi}$ \cite{Svensson.2017} or using identification methods for general nonlinear systems \cite{Schon.2021,Wigren.2022}.\\ 
	In the online setting, both state estimation and learning of $\bi{\Xi}$ need to be performed simultaneously in each time step $k$ based on the current inputs $\bi{u}_{k-1}$ and measurements $\bi{y}_k$ only. Apart from convergence, the computational complexity of the algorithm should be as small as possible.
	
	\section{PROPOSED METHOD}\label{sec:methods}
	To enable efficient online inference and learning, the parameters to be learned online should be restricted to expressive \ac{dof} while retaining the adaptation flexibility to learn ``realistic'' (\ie actually occurring) shapes of $\bi{\Xi} \left(\cdot,j \right)$.\\
	The key idea of the proposed method is to capture different realizations $\bi{\Xi}\left(\cdot,j\right)$ of the changing system behavior \textit{offline} with a highly flexible \ac{gp} approximation \cite{Solin.2020} and to transform it to a low-dimensional representation. The \ac{gp} approximation lives in $V^{(\phi,N)}$, spanned by a high-dimensional set of basis functions $\{\phi_n \left(\bi{x}_k\right)\}_{n=1}^{N}$. 
	The transformation is done by data-driven extraction of the most significant patterns that account for the change in $\bi{\Xi}\left(\cdot,j\right)$. These patterns are used to condition a new set of few expressive basis functions $\{\rho_m \left(\bi{x}_k\right)\}_{m=1}^{M}$ along which online learning is performed efficiently in a restricted subspace $V^{(\rho,M)} \subset V^{(\phi,N)}$ with less \ac{dof}, \ie $M < N$. Due to the simplicity of the resulting estimation problem, a standard noise-adaptive \ac{pf} \cite{Ozkan.2013} yields sufficient performance. The methodological steps are illustrated in Fig.~\ref{fig:ProposedMethod}.\\ 
	In Sec.~\ref{sec:HilbertGP} and Sec.~\ref{sec:Conditioning}, we consider single-task regression of the $i$-th target function ${\Xi}_i$ and refer to it as $\Xi$ to avoid notation clutter. However, methods for multi-task regression follow trivially by using $n_{\xi}$ single-task regressors in parallel.
	\begin{figure}[t]
		\centering
		\fontsize{8pt}{10pt}\selectfont 
		\import{figures/}{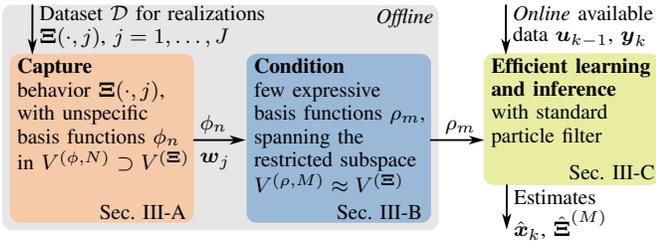}
		\normalsize 
		\caption{Proposed method to enable efficient online inference and learning without expert knowledge. First, ``realistic'' (\ie actually occurring) shapes of the target function $\bi{\Xi}(\cdot,j)$, $j=1,\hdots,J$, are captured using the flexible Hilbert-\ac{gp} formulation introduced in \cite{Solin.2020} with unspecific basis functions $\{\phi_n\}_{n=1}^{N}$. Second, in a \textit{data-driven} conditioning step, a new set of few expressive basis functions $\{\rho_m\}_{m=1}^{M}$, $M<N$, is constructed from the most significant patterns in the Hilbert-\ac{gp} coefficients $\bi{w}_j$ without expert knowledge. Based on the obtained low-dimensional approximation, efficient online inference and learning is accomplished using standard \ac{pf} \cite{Ozkan.2013}.}
		\label{fig:ProposedMethod}
	\end{figure}
	\subsection{Capturing the Function Space using Hilbert-\ac{gp}}\label{sec:HilbertGP}
	As a generic representation for learning, we model $\hat{{\Xi}} \sim \mathcal{GP}(0,\kappa(\bi{x}_k,\bi{x}_k'))$ which allows to incorporate prior assumptions regarding $\Xi$, \eg smoothness, intuitively by choosing a kernel $\kappa(\bi{x}_k,\bi{x}_k')$ with corresponding hyperparameters. In particular, we use the \ac{gp} approximation presented in \cite{Solin.2020} due to its beneficial orthogonality properties and its integration in existing inference and learning schemes \cite{Svensson.2017,Berntorp.2021,Berntorp.2022}. The formulation relies on a basis function expansion, and we will refer to it as ``Hilbert-\ac{gp}'' in the following. The concept is briefly revisited along with the presentation of the proposed method. For a detailed introduction, we refer to \cite{Solin.2020}. 
	The main idea is to approximate the kernel using $N$ basis functions ${\phi}_n(\bi{x}_k)$ according to
	\begin{equation}\label{eq:covariance_approx}
		\kappa\left(\bi{x}_k, \bi{x}^{\prime}_k\right) \approx \sum_{n=1}^N S\left( \sqrt{\lambda_n} \right) \phi_n(\bi{x}_k) \phi_n\left(\bi{x}^{\prime}_k\right),
	\end{equation}
	where $S\left( \sqrt{\lambda_n} \right)$ is a factor for encoding the \ac{gp} prior in frequency domain and will be explained later. Using \eqref{eq:covariance_approx}, ${\Xi}(\bi{x}_k, \cdot)$ is approximated by a finite-dimensional basis function expansion
	\begin{equation}\label{eq:Xi_Orig}
			{{\hat{\Xi}}^{(N)}}(\bi{x}_k) =\sum_{n=1}^N w_{n} \phi_{n}(\bi{x}_k) =\bi{w}\tr \bi{\phi}(\bi{x}_k),
	\end{equation}
	with coefficient vectors $\bi{w}\tr = \begin{bmatrix}w_{1} & \hdots & w_{N}\end{bmatrix}$, $\bi{w} \in \mathbb{R}^{N}$, resulting in $n_{\xi}N$ parameters to be found in multi-task regression. The employed basis functions 
	\begin{equation}\label{eq:basisfunctions_phi}
		\phi_{n}(\bi{x}_k) \triangleq \prod_{i=1}^{n_x} \frac{1}{\sqrt{L_i}} \sin \left(\frac{\pi j_i\left(x_{k,i}+L_i\right)}{2 L_i}\right),
	\end{equation}
	are eigenfunctions of the Laplace operator and span the function space $V^{(\phi,N)}$ for input features $\bi{x}_k \in \Omega \subset \mathbb{R}^{n_x}$ on a hypercube domain $\Omega=\left[-L_1,L_1\right] \times \cdots \times \left[-L_{n_x},L_{n_x}\right]$. In the limit $N, L_1, \hdots L_{n_x} \rightarrow \infty$, the basis function expansion converges to the actual \ac{gp} \cite{Solin.2020}. Please note, the eigenfunctions form an orthonormal basis with respect to the inner product $\langle\phi_{n_1}, \phi_{n_2} \rangle$ and have associated eigenvalues
	\begin{equation}\label{eq:eigenvals}
		\lambda_{n} \triangleq \sum_{i=1}^{n_x}\left(\frac{\pi j_i}{2 L_i}\right)^2,
	\end{equation}
	in which each basis function features a unique combination of integers $(j_1, \hdots, j_{n_x})$ that is chosen to maximize the basis functions expressiveness.\\
	The \ac{gp} prior is incorporated by finding a set of weights $w_{j}$ such that the power spectrum of the chosen covariance kernel $\kappa(\bi{x}_k,\bi{x}_k')$ is replicated according to \eqref{eq:covariance_approx}. Here, we employ a squared exponential kernel $\kappa_{\mathrm{se}}(\bi{x}_k,\bi{x}_k')$, as it has been employed successfully for similar settings \cite{Svensson.2017,Berntorp.2021}. The corresponding kernel and spectral density $S_{\mathrm{se}}(\omega)$ are
	\begin{align}
		\kappa_{\mathrm{se}}(\bi{x}_k,\bi{x}_k')&= \sigma^2 \exp \left(-\frac{{\norm{\bi{x}_k-\bi{x}_k'}}^2}{2 l^2}\right),\label{eq:seKernel}\\
		S_{\mathrm{se}}(\omega)&= \sigma^2 \sqrt{2 \pi l^2} \exp \left(-\frac{ l^2 \omega^2}{2}\right), \label{eq:sepowerspectrum}
	\end{align} 
	with hyperparameters $\sigma^2$, $l$ to be defined by the user. Equipped with the basis functions \eqref{eq:basisfunctions_phi}, and the corresponding eigenvalues \eqref{eq:eigenvals}, the varying behavior of target function $\Xi$ is captured by finding suitable coefficient vectors $\bi{w}_j$ for each realization $\Xi\left(\cdot,j\right)$, $j=1,\hdots,J$ in the data set $\mathcal{D}$. This is accomplished by computing the posterior distribution of the coefficients $\bi{w}_j$ \cite{Solin.2020,Menzen.2023} or equivalently solving the regularized least squares problem 
	\begin{equation}
		\bi{w}_j = \argmin_{\bar{\bi{w}}} \sum_{k=1}^{K} \left({\xi}_{k}^{j} - \bar{\bi{w}}\tr \bi{\phi} (\bi{x}_k^{j}) \right)^2  + \sigma_{\xi}^2 \norm{\bar{\bi{w}}}_{\bi{V}^{-1}}^2,
	\end{equation}
	where the \ac{gp} prior is encoded by setting the diagonal regularization matrix $\bi{V}$ with entries $S_{\mathrm{se}}\left( \sqrt{\lambda_n} \right)$, $n = 1, \hdots, N$, following the lines of \cite{Solin.2020}. The combined hyperparameters $\bi{\vartheta} = \{\sigma_{\xi}^2, \sigma^2, l \}$ can be optimized as described in \cite{Solin.2020}.
	
	\subsection{Data-driven Conditioning}\label{sec:Conditioning}
	Having captured the shape of $J$ target function realizations in the coefficient vectors $\bi{w}_j$ of approximation $\hat{\Xi}^{(N)}$, a new set of expressive basis functions is constructed in a \textit{data-driven} fashion. It is worth noting that the target function $\Xi\left(\cdot,j\right)$ usually revisits similar shapes $j \in \left[1,J\right]$ at different time instants in physical systems. As an example, the tire-road friction characteristic is antisymmetric and can be expressed as combinations of $\arctan$ derivates \cite{Pacejka.1992,Berntorp.2022}. Instead of leveraging this by choosing basis functions from expert knowledge, we build a matrix of Hilbert-\ac{gp} parametrizations for the finite set of realizations $\Xi\left(\cdot,j\right)$, $j=1,\hdots,J$, as 
	\begin{equation}
			\bi{W} = \begin{bmatrix}
				\bi{w}_1 & \hdots & \bi{w}_J
			\end{bmatrix}\tr = \bi{U} \bi{\Sigma} \bi{Z}\tr,
	\end{equation}
	and perform a \ac{svd}, yielding the unitary matrices $\bi{U} \in \R^{J \times J}$, $\bi{Z} \in \R^{N \times N}$, and a matrix $\bi{\Sigma} \in \R^{J \times N}$ with singular values $\sigma_j$ as diagonal entries in decreasing order.\\
	Now, the $M$ most significant \ac{dof} can be extracted by choosing the first $M$ columns $\{\bi{z}_m\}_{m=1}^{M}$ of $\bi{Z}$ to define a new set of expressive basis functions $\{\rho_m\left(\bi{x}_k\right)\}_{m=1}^{M}$ with
	\begin{equation}\label{eq:rho}
		{\rho}_m(\bi{x}_k) = \bi{z}_m\tr \bi{\phi}(\bi{x}_k).
	\end{equation}
	The user-defined hyperparameter $M$ is chosen as a trade-off between modeling accuracy and computational complexity of the resulting inference and learning algorithm (which will be described in Sec.~\ref{sec:OnlineLearningAndInference}). The basis functions span the subspace $V^{(\rho,M)}$, restricted to ``realistic'' shapes of the target function $\Xi \left(\cdot,j\right)$. Loosely speaking, each of the basis functions $\rho_m$ corresponds to a specific composition of original basis functions $\phi_n$ and enables learning along a distinct \ac{dof}. Moreover, the new set of basis functions inherits orthogonality properties from vectors $\bi{z}_m$ and original basis functions $\phi_n$.
	\begin{lemma}\label{lem:orthogonality}
		The basis functions $\{ \rho_{m}(\bi{x}_k) \}_{m=1}^{M}$ form an orthonormal set with respect to the inner product $\langle \rho_{i}, \rho_{j} \rangle$, \ie
		\begin{equation}\label{eq:othogonality}
			\int_{\Omega} \rho_{i}(\bi{x}_k) {\rho}_{j}(\bi{x}_k) \mathrm{d}\bi{x}_k = \delta_{ij}.
		\end{equation}
	\end{lemma}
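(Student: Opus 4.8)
The plan is to reduce the claimed orthonormality of the $\{\rho_m\}$ to two facts that are already available in the setup: the orthonormality of the underlying Laplace eigenfunctions $\{\phi_n\}$ and the unitarity of the matrix $\bi{Z}$ produced by the \ac{svd} $\bi{W} = \bi{U}\bi{\Sigma}\bi{Z}\tr$. Concretely, I would start from the definition \eqref{eq:rho}, writing $\rho_m(\bi{x}_k) = \bi{z}_m\tr \bi{\phi}(\bi{x}_k) = \sum_{n=1}^N z_{m,n}\,\phi_n(\bi{x}_k)$, and substitute both $\rho_i$ and $\rho_j$ into the inner product appearing in \eqref{eq:othogonality}.

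Next, I would expand the product of the two finite sums and interchange summation and integration (legitimate because the sums are finite), which turns $\int_{\Omega} \rho_i(\bi{x}_k)\,\rho_j(\bi{x}_k)\,\mathrm{d}\bi{x}_k$ into the double sum $\sum_{n_1}\sum_{n_2} z_{i,n_1}\,z_{j,n_2}\,\langle \phi_{n_1}, \phi_{n_2}\rangle$. At this point I would invoke the orthonormality of the eigenfunctions, $\langle \phi_{n_1},\phi_{n_2}\rangle = \delta_{n_1 n_2}$, which is noted after the eigenvalue definition \eqref{eq:eigenvals}. This collapses the double sum to the single sum $\sum_n z_{i,n}\,z_{j,n} = \bi{z}_i\tr \bi{z}_j$.

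Finally, I would use that $\bi{Z}$ is unitary by construction of the \ac{svd}, so its columns form an orthonormal system; in particular, for the leading $M$ columns retained to build the $\rho_m$ we have $\bi{z}_i\tr \bi{z}_j = \delta_{ij}$. Chaining the two steps yields $\langle \rho_i, \rho_j\rangle = \delta_{ij}$, which is exactly \eqref{eq:othogonality}.

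I do not expect a genuine obstacle here; the statement is essentially that an orthonormal expansion transformed by an orthogonal map stays orthonormal. The only points worth stating carefully are the (trivial) exchange of summation and integration, and the explicit use that the \emph{columns} $\bi{z}_m$ of $\bi{Z}$ are the coefficient vectors—consistent with the factorization $\bi{W}=\bi{U}\bi{\Sigma}\bi{Z}\tr$—so that $\bi{z}_i\tr \bi{z}_j$ is precisely the $(i,j)$ entry of $\bi{Z}\tr\bi{Z} = \mathbf{I}_N$.
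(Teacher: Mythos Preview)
Your proposal is correct and follows essentially the same route as the paper: expand $\rho_i,\rho_j$ via \eqref{eq:rho}, swap the finite sums with the integral, use orthonormality of the $\phi_n$ to collapse to $\bi{z}_i\tr\bi{z}_j$, and conclude from the orthonormality of the columns of $\bi{Z}$. The only cosmetic difference is that the paper justifies the interchange via Dirichlet's test for uniform convergence, whereas your observation that the sums are finite is simpler and already sufficient.
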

	\begin{proof} 
		See appendix, page \pageref{proof:orthogonality}.
	\end{proof}
	Using the functions $\{ \rho_{m}(\bi{x}) \}_{m=1}^{M}$, a low-dimensional formulation for modeling $\Xi$ can be constructed according to
	\begin{equation}\label{eq:Xi_PCA}
			{{\hat{\Xi}}^{(M)}}(\bi{x}_k) =\sum_{m=1}^M v_{m} \rho_m(\bi{x}_k) =\bi{v}\tr \bi{\rho}(\bi{x}_k), 
	\end{equation}
	in which $\bi{v}\tr = \begin{bmatrix}v_{1} & \hdots & v_{M}\end{bmatrix}$, $\bi{v} \in \mathbb{R}^{M}$ is a coefficient vector with $M < N$. The number of parameters to be determined in multi-task regression is $n_{\xi}M$, independent of the number of original basis functions $N$. Thus, a high-dimensional and accurate Hilbert-\ac{gp} formulation can be tailored to learn $\Xi (\cdot,j)$ online with few \ac{dof}, significantly reducing the complexity of the inference problem.\\
	In Fig.~\ref{fig:BasisFunctionAccuracy}, the approximation accuracy of both basis function expansions \eqref{eq:Xi_Orig} and \eqref{eq:Xi_PCA} in a numerical example is shown for a set of realizations $\Xi(\cdot,j)$, $j=1,\hdots,J$, depending on the number of parameters to be learned online. The results suggest that significantly fewer \ac{dof} are required with the proposed approach. In particular, the accuracy of the original Hilbert-\ac{gp} increases step-wise with every second \ac{dof}, indicating that basis functions with even integers $j_i$ in \eqref{eq:basisfunctions_phi} are inexpressive to represent the chosen target function due to symmetry. In \cite{Berntorp.2022}, this is exploited to facilitate learning based on expert knowledge. In contrast, the proposed method exploits the \ac{dof} effectively in a \textit{data-driven} fashion.\\
	Moreover, the distance between $\hat{\Xi}^{(N)}(\bi{x}_k)$ and $\hat{\Xi}^{(M)}(\bi{x}_k)$ can be quantified prior to evaluation.
	\begin{figure}[t]
		\centering
		\includegraphics[width=0.49\textwidth]{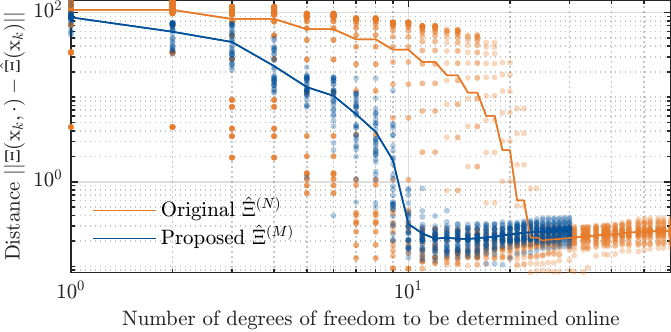}
		\caption{Error between the true function $\Xi(x_k,j) = 10 \text{sinc}(j x_k /100)$ and the basis function expansions $\hat{\Xi}^{(i)}(x_k)$ for $i=N,M$ and for different realizations $j=1,\hdots,30$ on the domain $\Omega = \left[-15,15\right]$. Each error result corresponds to a dot, and the mean is drawn as a line. The number of required \ac{dof} to achieve a certain approximation performance is significantly reduced using the proposed approach due to the choice of expressive basis functions.}
		\label{fig:BasisFunctionAccuracy}
	\end{figure}
	\begin{theorem}\label{theo:approximation_accuracy}
		The distance $d(\bi{x}_k) = {\hat{\Xi}}^{(N)}(\bi{x}_k) - {\hat{\Xi}}^{(M)}(\bi{x}_k)$ between the basis function expansions ${\hat{\Xi}}^{(N)}(\bi{x}_k)$ and ${\hat{\Xi}}^{(M)}(\bi{x}_k)$ on the domain $\Omega$ is given by
		\begin{equation}
			{\norm{d(\bi{x}_k)}}^2 = {\norm{\bi{w} - \sum_{m=1}^{M} {v}_{m} \bi{z}_m }}^2.
		\end{equation}
	\end{theorem}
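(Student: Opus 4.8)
The plan is to exploit the orthonormality of the original eigenfunctions $\{\phi_n\}_{n=1}^{N}$, which furnishes the bridge between the $L_2(\Omega)$ norm of the residual \emph{function} $d(\bi{x}_k)$ on the left-hand side and the Euclidean norm of a coefficient \emph{vector} in $\R^N$ on the right-hand side. First I would substitute the definition $\rho_m(\bi{x}_k) = \bi{z}_m\tr \bi{\phi}(\bi{x}_k)$ from \eqref{eq:rho} into the low-dimensional expansion \eqref{eq:Xi_PCA}, so that both $\hat{\Xi}^{(N)}$ and $\hat{\Xi}^{(M)}$ are written purely in the common basis $\{\phi_n\}$. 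Collecting the $\phi_n$ terms then yields
\begin{equation}
	d(\bi{x}_k) = {\hat{\Xi}}^{(N)}(\bi{x}_k) - {\hat{\Xi}}^{(M)}(\bi{x}_k) = \left( \bi{w} - \sum_{m=1}^{M} v_m \bi{z}_m \right)\tr \bi{\phi}(\bi{x}_k) =: \bi{c}\tr \bi{\phi}(\bi{x}_k),
\end{equation}
so that the residual is itself a single linear combination of the eigenfunctions with coefficient vector $\bi{c} = \bi{w} - \sum_{m=1}^{M} v_m \bi{z}_m$.

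Next I would evaluate the squared $L_2(\Omega)$ norm $\norm{d(\bi{x}_k)}^2 = \langle d, d \rangle = \int_{\Omega} d(\bi{x}_k)^2 \, \mathrm{d}\bi{x}_k$ by expanding the product into the finite double sum $\sum_{n_1=1}^{N} \sum_{n_2=1}^{N} c_{n_1} c_{n_2} \langle \phi_{n_1}, \phi_{n_2} \rangle$, where the interchange of summation and integration is immediate since both sums are finite. Because the eigenfunctions form an orthonormal basis with respect to this inner product (as established after \eqref{eq:eigenvals}), one has $\langle \phi_{n_1}, \phi_{n_2} \rangle = \delta_{n_1 n_2}$, which collapses the double sum to $\sum_{n=1}^{N} c_n^2 = \norm{\bi{c}}^2$. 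Substituting back the definition of $\bi{c}$ then gives exactly the claimed identity.

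There is no genuine analytical obstacle here; the statement is essentially Parseval's identity applied to the shared orthonormal basis. The one point deserving explicit care is the tacit change of meaning of $\norm{\cdot}$ between the two sides of the equation, from an $L_2(\Omega)$ function norm to a Euclidean coefficient norm on $\R^N$, and making transparent that this equivalence rests \emph{solely} on the orthonormality of $\{\phi_n\}$. In particular, it is worth remarking that the orthonormality of the new set $\{\rho_m\}$ from Lemma~\ref{lem:orthogonality} is \emph{not} required for this result, since the residual is projected onto the original basis rather than the reduced one.
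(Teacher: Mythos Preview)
Your proof is correct and slightly more streamlined than the paper's. The paper expands $\norm{d}^2$ into the three cross-terms $\langle f^{(N)},f^{(N)}\rangle - 2\langle f^{(N)},f^{(M)}\rangle + \langle f^{(M)},f^{(M)}\rangle$ and evaluates each separately; in particular, for the third term it invokes Lemma~\ref{lem:orthogonality} (orthonormality of $\{\rho_m\}$) to obtain $\bi{v}\tr\bi{v}$ and then re-expands this via orthonormality of the $\bi{z}_m$ to recombine everything into the final quadratic form. You instead first push $\hat{\Xi}^{(M)}$ back into the $\{\phi_n\}$ basis, collect a single coefficient vector $\bi{c}=\bi{w}-\sum_m v_m\bi{z}_m$, and apply Parseval once. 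This buys you a shorter argument and, as you rightly observe, makes Lemma~\ref{lem:orthogonality} unnecessary for this theorem; the paper's detour through the $\{\rho_m\}$ basis is not wrong but is avoidable. Your remark about the silent overloading of $\norm{\cdot}$ between $L_2(\Omega)$ and $\R^N$ is also a fair clarification that the paper leaves implicit.
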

	\begin{proof} 
		See appendix, page \pageref{app:approximation_accuracy}.
	\end{proof}
	\begin{remark}
		If the target functions in multi-task regression $\Xi_i(\cdot,j)$, $i=1,...,n_\xi$, are of significantly different shape, the number of basis functions needs to be increased to capture relevant characteristics. Alternatively, separate Hilbert-\ac{gp}s and/or conditioning steps per target can be employed.
	\end{remark}
	\begin{remark}\label{rem:approx_error1}
		If we choose $v_{m}~=~\sum_{m=1}^{M} {u}_{jm} \sigma_m$ and $M=\min\left(J,N\right)$ for modeling realization $j$, with ${u}_{jm}$ and $\sigma_m$ being elements of $\bi{U}$ and $\bi{\Sigma}$, the distance ${\norm{d(\bi{x}_k)}}^2 = 0$, and the original basis function expansion is recovered. 
	\end{remark}
	\begin{remark}\label{rem:approx_error2}
		In practice, we see that the approximation error of the reduced-order representation with respect to the true function does not decrease asymptotically towards the original basis function expansion error as we add \ac{dof} (see Fig.~\ref{fig:BasisFunctionAccuracy}). Instead, ${\hat{\Xi}}^{(M)}$ usually reaches the highest accuracy for $M < J$, \ie if only the most significant \ac{dof} are used.
	\end{remark}
	\begin{remark}\label{rem:generalization}
		As the underlying \textit{dominant features} in the offline data set are extracted, the learnable function space $V^{(\rho,M)}$ is bound to linear combinations of these dominant features. Therefore, $V^{(\rho,M)}$ can contain function shapes (\ie linear feature combinations) that are not present in the offline data set, with potential implications for generalization beyond the training data distribution.
	\end{remark}	
	For online inference and learning of a Hilbert-\ac{gp}, specifically tailored \ac{smc} have been proposed in \cite{Berntorp.2021,Berntorp.2022}. In contrast, the low-dimensional basis function expansion $\hat{\Xi}^{(M)}$ simplifies the estimation problem significantly, such that a standard \ac{pf} \cite{Ozkan.2013} can be used here.
	
	\subsection{Efficient Online Inference and Learning}\label{sec:OnlineLearningAndInference}
	Having obtained the new set of orthonormal basis functions $\{\rho_m\left(\bi{x}_k\right)\}_{m=1}^{M}$, efficient learning of the target functions ${\Xi}_i(\cdot,j)$, $i=1,\hdots,n_{\xi}$, in the restricted subspace $V^{(\rho,M)}$ is possible using standard Bayesian inference methods. 
	To this end, we formulate a system model that contains the parameters $\bi{v}_i$ to be learned online following an ad-hoc state augmentation approach, as done in a related setting \cite{Kullberg.2021c}. The reason for this is two-fold:\\
	(i) Parameter estimation by state augmentation is a baseline approach that, despite its simplicity, yields sufficient estimation quality for a wide range of applications and is thus commonly applied and accepted in practice.\\
	(ii) The target function $\bi{\Xi}$ is nonlinearly \textit{nested} inside known system dynamics $\bi{f}$. Due to the nonlinearly nested structure \eqref{eq:model_structure} in the present setting, the inverse model $\bi{\Xi}(\bi{x}_k, j) = \left. \bi{f}^{-1} (\bi{x}_{k+1}) \right\lvert_{\bi{x}_{k},\bi{u}_{k}}$ is not generally known for updating the posterior of $\bi{v}_i$, formally obstructing direct application of \cite{Svensson.2017,Berntorp.2021,Berntorp.2022}.\\
	In this light, we model the system \eqref{eq:model_structure} using the basis function expansion $\hat{\bi{\Xi}}^{(M)}$ (restricted to the subspace $V^{(\rho,M)}$), by
	\begin{subequations}\label{eq:estimation_model}
		\begin{align}
			\tilde{\bi{x}}_{k+1} &= \bi{F} \left(\tilde{\bi{x}}_k, \bi{u}_k\right) + \tilde{\bi{e}}_k\\
			&= \begin{bmatrix} \bi{x}_{k+1} \\ \bi{v}_{1,k+1} \\ \vdots \\ \bi{v}_{n_{\xi},k+1} \end{bmatrix} = \begin{bmatrix} \bi{f}(\bi{x}_{k}, \bi{u}_{k}, \hat{\bi{\Xi}}^{(M)}(\bi{x}_{k})) \\ \bi{v}_{1,k} \\ \vdots \\ \bi{v}_{n_{\xi},k} \end{bmatrix} +  \begin{bmatrix} \bi{e}_{k}^{x} \\ \bi{e}_{k}^{v_1} \\ \vdots \\ \bi{e}_{k}^{{v_{n_\xi}}} \end{bmatrix},\nonumber\\
			\bi{y}_{k} &=\bi{{h}}({\bi{x}}_{k}, \bi{u}_k) + \bi{e}_{k}^{y},
		\end{align}
	\end{subequations}
	with a random walk assumption on the parameters $\bi{v}_i$. The parameter noise for target function $\Xi_i$ is shaped according to the significance of the respective \ac{dof} (\ie the basis function $\rho_m(\bi{x}_k)$), represented by the truncated singular value matrix $\bi{\Sigma}_i$ from the conditioning step. Hence, the parameter process noise is drawn $\bi{e}_k^{v_i} \sim \N \left(\bi{0}, c \bi{\Sigma}_i \right)$, with $c$ being a user-defined design parameter that scales the exploration capability. Hence, the overall process noise has a block-diagonal covariance $\tilde{\bi{Q}}$ and is drawn ${\bi{e}}_k^{\tilde{x}} \sim \N ( \bi{0}, \tilde{\bi{Q}})$.\\
	Based on the model formulation \eqref{eq:estimation_model}, the noise-adaptive marginalized particle filter proposed in \cite{Ozkan.2013} is employed. The motivation for noise adaptation is that, if the true function $\bi{\Xi}$ changes rapidly, the error between measured evidence and the current estimate of the system state increases. In other words, the current approximation $\prob\left({\tilde{\bi{x}}}_k \vert {\tilde{\bi{x}}}_{0: k-1}, \bi{y}_{0: k-1}\right)$ does not represent new evidence $\bi{y}_k$ well which would lead to rapid algorithm divergence without noise adaptivity.\\
	In particular, we model the measurement noise covariance $\bi{R}_k$ with an inverse Wishart distribution $\bi{R}_k \sim \IW (\nu_k,\bi{\Lambda}_k)$, which is the conjugate prior for the multivariate normal distribution. In principle, the process noise of the parameters could be adapted as well, but this lead to frequent path degeneracy in simulative case studies. For self-contained presentation, the main steps of the noise-adaptive particle filter in \cite{Ozkan.2013} are presented and connected to the problem at hand subsequently. For notational clarity, dependence on the inputs $\bi{u}_k$ is omitted.\\
	The overall target is the joint probability density function of state trajectory $\tilde{\bi{x}}_{0:k}$ and current noise parameters $\bi{\theta}_k~=~\bi{R}_k$, given the measurements $\bi{y}_{0:k}$
	\begin{equation}
		\prob\left( \tilde{\bi{x}}_{0:k}, \bi{\theta}_k \vert \bi{y}_{0:k} \right) = \underbrace{\prob\left( \bi{\theta}_k \vert \tilde{\bi{x}}_{0:k}, \bi{y}_{0:k} \right)}_{\text{Posterior (II)}} \underbrace{\prob\left( \tilde{\bi{x}}_{0:k} \vert \bi{y}_{0:k} \right)}_{\text{Posterior (I)}},
	\end{equation}
	that is composed of the recursively computed posterior density of the states $\prob\left( \tilde{\bi{x}}_{0:k} \vert \bi{y}_{0:k} \right)$ and the posterior density $\prob\left( \bi{\theta}_k \vert \tilde{\bi{x}}_{0:k}, \bi{y}_{0:k} \right)$ of the noise parameters. The posterior (I) is approximated by a set of $N_p$ weighted particles
	\begin{equation}
		\prob\left(\tilde{\bi{x}}_{0: k} \vert \bi{y}_{0: k}\right) \approx \sum_{i=1}^{N_p} q_k^i \delta_{\tilde{\bi{x}}_{0: k}^i, \tilde{\bi{x}}_{0: k}^{~}},
	\end{equation}
	that represent different state trajectories. The weights $q_k^i$ capture the probability of the respective trajectory at the current time instant $k$ and are recursively updated as
	\begin{equation}\label{eq:weightupdate}
		{q}_k^i \propto q_{k-1}^i \frac{\prob\left(\bi{y}_k\vert \tilde{\bi{x}}_{0: k}^i, \bi{y}_{0: k-1}\right)\prob\left(\tilde{\bi{x}}_k^i \vert \tilde{\bi{x}}_{0: k-1}^i, \bi{y}_{0: k-1}\right)}{\pi\left(\tilde{\bi{x}}_k^i \vert \tilde{\bi{x}}_{0: k-1}^i, \bi{y}_{0: k}\right)},
	\end{equation}
	when new measurement evidence $\bi{y}_k$ becomes available. The density $\pi\left(\tilde{\bi{x}}_k^i \vert \tilde{\bi{x}}_{0: k-1}^i, \bi{y}_{0: k}\right)$ is a tractable proposal distribution from which the states are drawn \cite{Sarkka.2023}. The employed likelihood $\prob\left( \bi{y}_{k} \vert  \tilde{\bi{x}}_{0:k}^i, \bi{y}_{0:k-1} \right)$ is obtained by integrating out the noise parameters particle-based, according to
	\begin{equation}
		\begin{aligned}
			&\prob\left( \bi{y}_{k} \vert  \tilde{\bi{x}}_{0:k}, \bi{y}_{0:k-1} \right) = \int \prob\left(\bi{y}_{k} \vert  \tilde{\bi{x}}_{k}, \bi{\theta}_{k-1} \right) \\
			& \hspace{2.5cm} \times \prob\left(\bi{\theta}_{k-1} \vert  \tilde{\bi{x}}_{0:k}, \bi{y}_{0:k-1} \right) \mathrm{d} \bi{\theta}_{k-1},
		\end{aligned}
	\end{equation}
	which, due to the inverse Wishart prior, is a Student-t ($\mathcal{T}$) distribution
	\begin{equation}\label{eq:studentt}
			\prob\left(\bi{y}_k \vert {\tilde{\bi{x}}}_{k}^i\right)
			=\T \left(\bi{h}\left({{\bi{x}}}_k^{i}, \bi{u}_k\right), \bi{\Lambda}_k, \nu_k - n_y + 1 \right),
	\end{equation}
	for each particle, dependent on its current noise statistics. As stated earlier, if the underlying function $\bi{\Xi}$ changes suddenly, the current approximation $\prob\left({\tilde{\bi{x}}}_k^i \vert {\tilde{\bi{x}}}_{0: k-1}^i, \bi{y}_{0: k-1}\right)$ does not represent new evidence $\bi{y}_k$ well, leading to rapid algorithm divergence without noise adaptivity. The inverse Wishart prior on $\bi{R}_k$ accounts for this by adapting the noise covariance, effectively exploring a larger search space.\\
	Given the state estimates $\tilde{\bi{x}}_k^i$, the posterior density (II) can be evaluated. As the parameter posterior is again an inverse Wishart distribution $\IW (\nu_k,\bi{\Lambda}_k)$, this amounts to updating the parameter statistics with the ``measurement'' $\bi{p}_k = \bi{y}_k - \bi{h}(\tilde{\bi{x}}_k)$ according to
	\begin{subequations}\label{eq:statistics_update}
		\begin{align}
			\nu_{k \lvert k} &=  \nu_{k \lvert k-1} +1,\\		
			\bi{\Lambda}_{k \lvert k} &=  \bi{\Lambda}_{k \lvert k-1} + \bi{p}_k \bi{p}_k\tr.
		\end{align}
	\end{subequations}
	The initial covariance is sampled $\bi{R}_0 \sim \IW\left(\nu_0, \bi{\Lambda}_0\right)$. If the noise parameters are time-varying, a forgetting factor $\lambda_{\mathrm{f}}$ can be incorporated in the prediction step of the statistics to reduce the impact of old observations and introduce mixing \cite{Berntorp.2021,Sarkka.2023}. The resulting method is implemented as a bootstrap \ac{pf} and summarized in Algorithm \ref{alg:OnlineInferenceAndLearning}.
	\begin{algorithm}[H]
		\caption{Pseudo-code of the proposed algorithm}
		\label{alg:OnlineInferenceAndLearning}
		\textbf{Data-driven conditioning:} Compute expressive basis functions $\{\rho_m(\bi{x}_k) \}_{m=1}^{M}$ according to \eqref{eq:rho} and choose initial coefficients $\{{\bi{v}}_{j,0}\}_{j=1}^{n_\xi}$.\\
		\textbf{Initialize:} Set $\{\tilde{\bi{x}}_{0}^{i}\}_{i=1}^{N_p} \sim p(\tilde{\bi{x}}_0)$, $\{\bi{\Lambda}_0^{i}, \nu_0^{i}\}_{i=1}^{N_p} = \{\bi{\Lambda}_0, \nu_0\}$, $\lambda_{\mathrm{f}} \in (0,1]$.
		\begin{algorithmic}[1]
			\For{$k = 1, \hdots$}
			\State Read current data $\bi{u}_{k-1}$, $\bi{y}_k$.
			\For{$i = 1, \hdots, N_p $}
			\State Time update of noise statistics 
			\Statex \qquad \quad ${\nu}_{k \lvert k-1}^{i} = \lambda_{\mathrm{f}} {\nu}_{k-1 \lvert k-1}^{i}$, $\bi{\Lambda}_{k \lvert k-1}^{i} = \lambda_{\mathrm{f}} \bi{\Lambda}_{k-1 \lvert k-1}^{i}$. 
			\State Sample $\tilde{\bi{x}}_{k}^{i} \sim \N(\bi{F} (\tilde{\bi{x}}_{k-1}^{i},\bi{u}_{k-1}), \tilde{\bi{Q}})$.
			\State Compute weight $\bar{q}_{k}^{i} = \prob\left(\bi{y}_k \vert {\bi{x}}_{k}^i\right)$ using \eqref{eq:studentt}.
			\State Measurement update of noise statistics \eqref{eq:statistics_update}.
			\EndFor
			\State Normalize weights $q_{k}^{i} = {\bar{q}_k^{i}}/{\sum_{i=1}^{N_p} \bar{q}_k^{i}}$.
			\State Compute estimates $ \hat{{\bi{x}}}_k, \hat{\bi{v}}_{1,k}, \hdots, \hat{\bi{v}}_{n_\xi,k}$.
			\State Resample particles and copy the  
			\Statex \hspace{4mm} corresponding noise statistics.
			\EndFor
		\end{algorithmic}
	\end{algorithm}
	\begin{figure*}[t]
		\centering
		\fontsize{8pt}{10pt}\selectfont 
\begingroup%
  \makeatletter%
  \providecommand\color[2][]{%
    \errmessage{(Inkscape) Color is used for the text in Inkscape, but the package 'color.sty' is not loaded}%
    \renewcommand\color[2][]{}%
  }%
  \providecommand\transparent[1]{%
    \errmessage{(Inkscape) Transparency is used (non-zero) for the text in Inkscape, but the package 'transparent.sty' is not loaded}%
    \renewcommand\transparent[1]{}%
  }%
  \providecommand\rotatebox[2]{#2}%
  \newcommand*\fsize{\dimexpr\f@size pt\relax}%
  \newcommand*\lineheight[1]{\fontsize{\fsize}{#1\fsize}\selectfont}%
  \ifx\svgwidth\undefined%
    \setlength{\unitlength}{505.89001465bp}%
    \ifx\svgscale\undefined%
      \relax%
    \else%
      \setlength{\unitlength}{\unitlength * \real{\svgscale}}%
    \fi%
  \else%
    \setlength{\unitlength}{\svgwidth}%
  \fi%
  \global\let\svgwidth\undefined%
  \global\let\svgscale\undefined%
  \makeatother%
  \begin{picture}(1,0.43487713)%
    \lineheight{1}%
    \setlength\tabcolsep{0pt}%
    \put(0,0){\includegraphics[width=\unitlength,page=1]{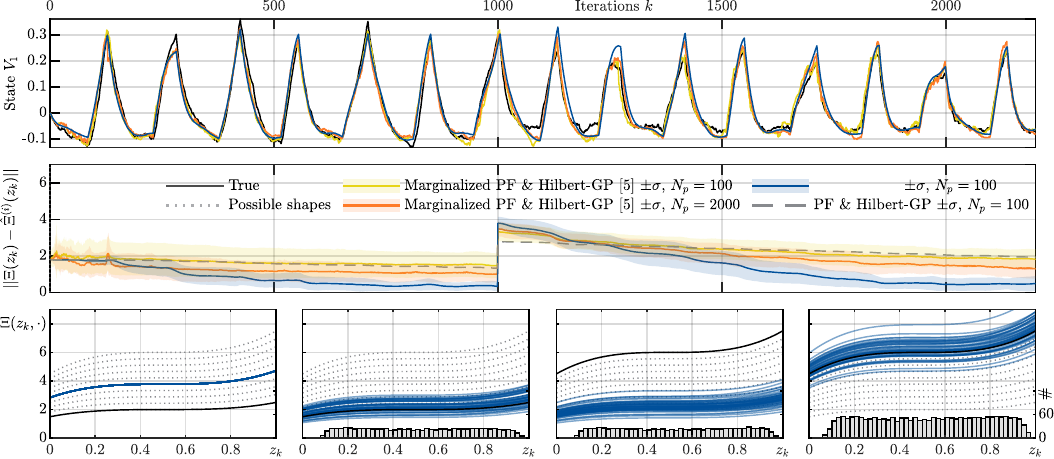}}%
    \put(0.99627605,0.07004866){\color[rgb]{0,0,0}\rotatebox{90}{\makebox(0,0)[lt]{\lineheight{1.10000002}\smash{\begin{tabular}[t]{l}Data Points\end{tabular}}}}}%
    \put(0.77233734,0.25414157){\color[rgb]{0,0,0}\makebox(0,0)[lt]{\lineheight{1.10000002}\smash{\begin{tabular}[t]{l}Algorithm \ref{alg:OnlineInferenceAndLearning}\end{tabular}}}}%
    \put(0,0){\includegraphics[width=\unitlength,page=2]{CDC_fig3_final.pdf}}%
  \end{picture}%
\endgroup%

		\normalsize 
		\caption{Online inference and learning in the simulation example \eqref{eq:batterymodel} using the proposed Algorithm \ref{alg:OnlineInferenceAndLearning} for learning with the expressive basis function expansion $\hat{\Xi}^{(M)}$ and the \ac{pf} described in Sec.~\ref{sec:OnlineLearningAndInference}. Algorithm \ref{alg:OnlineInferenceAndLearning} converges rapidly from a wrong initial condition and after a sudden change in the true function $\Xi$ at $k=1,000$. The shown error is the mean for $50$ Monte-Carlo runs (standard deviation $\sigma$ shown semi-transparent), and the estimation results $\hat{\Xi}^{(M)}$ for the respective Monte-Carlo runs are presented in the bottom plots for the indicated iterations.}
		\label{fig:OnlineLearningAndInference}
	\end{figure*}
	\section{SIMULATION RESULTS}\label{sec:results}
	For evaluation of Algorithm \ref{alg:OnlineInferenceAndLearning}, a numerical simulation example inspired by the nonlinear battery model in \cite{Aitio.2023} is used. The continuous-time state-space model 
	\begin{subequations}\label{eq:batterymodel}
		\begin{align}
			\dot{\bi{x}} & = \begin{bmatrix}
				\dot{z} \\ \dot{V}_{1} \\ \dot{T}_{c}
			\end{bmatrix} =  \begin{bmatrix}
				I Q_{\mathrm{bat}}^{-1} \\
				-\alpha \left( z, j \right) V_{1} + \beta \left( z \right) I\\
				\frac{1}{C_{\mathrm{c}}} \left( V_{1} I  + R_0 \left( z, I \right) I^2 -\frac{T_{\mathrm{c}} - T_{\mathrm{a}} }{ R_{\mathrm{c}} }\right) 
			\end{bmatrix},\\
			\bi{y} &= \begin{bmatrix}
				z, &
				V_0 \left( z \right) + V_{1} + R_0\left( z, I \right) I, &
				T_{\mathrm{c}}
			\end{bmatrix}\tr,
		\end{align}
	\end{subequations}
	exhibits $3$ states and $3$ outputs. The ``charging current'' $I$ is a scalar input. A discrete-time representation is obtained by $4$-th order Runge-Kutta integration (RK4) with $0.01$ seconds step width. For simulation, additive process and measurement noise $\bi{e}_k^{x}\sim \N(\bi{0}, 10^{-5} \times \mathbf{I}_3)$ and $\bi{e}_k^{y}\sim \N(\bi{0},10^{-2} \times \mathbf{I}_3)$ is considered, respectively. The nested function $\alpha(z_k,j)$ represents a nonlinear and parametric relationship related to an RC circuit that varies over time and is to be learned online. Thus, $\Xi(x_{1},j)=\alpha(z_k,j)$ and the true function $\alpha(z_k,j) = 4 j - 8 j \left( 0.5 - z_k \right)^3$ for $j = 1, \hdots, 10$. Please note, that $\alpha$ is nested nonlinearly in the discrete-time system dynamics due to RK4 discretization. The remaining quantities are physical parameters and nonlinear functions. For further details, the reader is referred to \cite{Aitio.2023}. In the evaluation scenario, a wrong initialization of $\hat{\Xi}$ is set, and at time step $k=1,000$, a sudden change from $\alpha (\cdot,1)$ to $\alpha (\cdot,10)$ is simulated. In both cases, the true function shape should be found.\\
	To initialize Algorithm \ref{alg:OnlineInferenceAndLearning}, we condition $M = 2$ expressive basis functions $\rho_{m}$ based on coefficients for $N=50$ original basis functions $\phi_n$, leveraging that the number of \ac{dof} to be learned online is independent of $N$. The user-defined parameters $\bi{\Lambda}_0 = \mathbf{I}_3$, $\nu_0 = 3$, $c = 3 \times 10^{-5}$, and the number of particles $N_p = 100$.\\ 
	For comparison, we first use, as a baseline approach, a Hilbert-\ac{gp} combined with the \ac{pf} described in Sec.~\ref{sec:OnlineLearningAndInference}. Second, as a state-of-the-art online inference and learning method, we use the tailored marginalized \ac{pf} for online learning of a Hilbert-\ac{gp} state-space model proposed in \cite{Berntorp.2021}. All Hilbert-\ac{gp} representations in the comparison study are pre-trained and initialized with coefficient vectors corresponding to the function realization $\alpha(\cdot,5)$ (\ie all methods start with the same initial conditions). The hyperparameters $\sigma^2$, $l$ and $\sigma_\xi^2$ are optimized according to \cite{Solin.2020} using the offline data set $\mathcal{D}$. The remaining design parameters in the comparison methods are hand-tuned.\\ 
	In Fig.~\ref{fig:OnlineLearningAndInference}, true values and estimates for state $V_1$ and target function $\alpha$ are shown in the top and bottom plots, respectively. In the middle plot, the estimation error regarding the target function is depicted. Despite the use of up to $2,000$ particles, the baseline, and the state-of-the-art comparison both show slower convergence than the proposed approach, which can be attributed to the complex inference task of determining the parameters of a highly flexible, unconditioned Hilbert-\ac{gp}. In contrast, Algorithm \ref{alg:OnlineInferenceAndLearning} converges rapidly to the true function $\alpha$ both, from a wrong initialization and after a sudden change of the system behavior. The learning effect is visible in the estimates of $V_1$ as well. The overall state estimate accuracy is comparable across the different methods in the present example simulation.\\	
	The execution of the unoptimized code took on average $5\,\mathrm{ms}$ (\ac{pf} \& Hilbert-\ac{gp}, $N_p=100$), $3\,\mathrm{ms}$ (\cite{Berntorp.2021}, $N_p=100$), $57\,\mathrm{ms}$ (\cite{Berntorp.2021}, $N_p=2,000$), and $7\,\mathrm{ms}$ (Algorithm \ref{alg:OnlineInferenceAndLearning}, $N_p=100$) per time step, respectively (i5-1235U CPU, 8 GB RAM). The results indicate that Algorithm \ref{alg:OnlineInferenceAndLearning} converges faster to the target function than a state-of-the-art method which uses significantly more particles and computational resources. In this light, Algorithm \ref{alg:OnlineInferenceAndLearning} can be considered computationally efficient thanks to the reduced number of required particles.
	
	\acresetall
	\section{CONCLUSIONS}\label{sec:conclusion}
	In the current work, a data-driven offline conditioning step for efficient online inference of latent states and learning of an unknown target function is proposed. The key idea is to restrict learning to a low-dimensional subspace spanned by expressive \acp{dof} without prior expert knowledge about the target function. In operation, {online} inference and learning is performed {efficiently} along these \ac{dof}. Compared to a baseline method and a state-of-the-art method, the proposed approach yields a significantly simplified estimation problem without relying on expert knowledge about the target function. Moreover, the scheme is capable of learning a nonlinearly \textit{nested} target function {inside} a first-principles model, which is addressed only in few existing works on online inference and learning.\\ 
	Thus, we contribute a method that has the potential to facilitate the operation of real-world systems in complex and changing conditions. Specifically, the proposed scheme provides a further step towards intelligent operation under fluctuating resistance forces, changing geometries, and parameters due to varying temperature, location, or time. Relevant real-world effects include wear and aging.\\
	In future research, the proposed conditioning step to extract expressive basis functions might be integrated into other online inference and learning schemes, \eg \cite{Berntorp.2021}, to facilitate learning. Another interesting research direction is to incorporate conditioning directly in offline particle Markov chain Monte Carlo methods to obtain a set of expressive basis functions for online inference and learning.
	
	\section*{ACKNOWLEDGMENT}
	For implementation of the proposed methods, parts of the program code of \cite{Kok.2024} have been used and modified. Jan-Hendrik Ewering would like to thank Prof. Manon Kok for making the implementation of \cite{Kok.2024} publicly available.
	
	\balance
	\bibliographystyle{IEEEtran}
	\bibliography{references}

\begin{thebibliography}{10}
\providecommand{\url}[1]{#1}
\csname url@samestyle\endcsname
\providecommand{\newblock}{\relax}
\providecommand{\bibinfo}[2]{#2}
\providecommand{\BIBentrySTDinterwordspacing}{\spaceskip=0pt\relax}
\providecommand{\BIBentryALTinterwordstretchfactor}{4}
\providecommand{\BIBentryALTinterwordspacing}{\spaceskip=\fontdimen2\font plus
\BIBentryALTinterwordstretchfactor\fontdimen3\font minus
  \fontdimen4\font\relax}
\providecommand{\BIBforeignlanguage}[2]{{%
\expandafter\ifx\csname l@#1\endcsname\relax
\typeout{** WARNING: IEEEtran.bst: No hyphenation pattern has been}%
\typeout{** loaded for the language `#1'. Using the pattern for}%
\typeout{** the default language instead.}%
\else
\language=\csname l@#1\endcsname
\fi
#2}}
\providecommand{\BIBdecl}{\relax}
\BIBdecl

\bibitem{Ishiyama.2023}
R.~Ishiyama, E.~Fukuyama, and B.~Enescu, ``{Estimation of time-variable
  friction parameters using machine learning},'' \emph{{Geophys. Journal
  Int.}}, vol. 236, no.~1, pp. 395--412, 2023.

\bibitem{Aitio.2023}
A.~Aitio, D.~J{\"o}st, D.~U. Sauer, and D.~A. Howey, ``{Learning battery model
  parameter dynamics from data with recursive Gaussian process regression},''
  preprint, \href{https://doi.org/10.48550/arXiv.2304.13666}{arXiv}, 2023.

\bibitem{Kok.2024}
M.~Kok, A.~Solin, and T.~B. Schön, ``Rao-blackwellized particle smoothing for
  simultaneous localization and mapping,'' \emph{Data-Centric Engineering},
  vol.~5, p. e15, 2024.

\bibitem{Viset.2022}
F.~Viset, R.~Helmons, and M.~Kok, ``{An Extended Kalman Filter for Magnetic
  Field SLAM Using Gaussian Process Regression},'' \emph{{Sensors (Basel,
  Switzerland)}}, vol.~22, no.~8, 2022.

\bibitem{Berntorp.2021}
K.~Berntorp, ``{Online Bayesian inference and learning of Gaussian-process
  state--space models},'' \emph{{Automatica}}, vol. 129, p. 109613, 2021.

\bibitem{Berntorp.2022}
K.~Berntorp and M.~Menner, ``{Online Constrained Bayesian Inference and
  Learning of Gaussian-Process State-Space Models},'' in \emph{{American
  Control Conf.}}\hskip 1em plus 0.5em minus 0.4em\relax IEEE, 2022, pp.
  940--945.

\bibitem{Lampe.2023}
N.~Lampe, Z.~Ziaukas, C.~Westerkamp, and H.-G. Jacob, ``{Analysis of the
  Potential of Onboard Vehicle Sensors for Model-based Maximum Friction
  Coefficient Estimation},'' in \emph{{American Control Conf.}}\hskip 1em plus
  0.5em minus 0.4em\relax IEEE, 2023, pp. 1622--1628.

\bibitem{Svensson.2016}
A.~Svensson, A.~Solin, S.~S{\"a}rkk{\"a}, and T.~B. Sch{\"o}n,
  ``{Computationally Efficient Bayesian Learning of Gaussian Process State
  Space Models},'' in \emph{{Conf. on AI and Statist.}}, vol.~51.\hskip 1em
  plus 0.5em minus 0.4em\relax PMLR, 2016, pp. 213--221.

\bibitem{Svensson.2017}
A.~Svensson and T.~B. Sch{\"o}n, ``{A flexible state--space model for learning
  nonlinear dynamical systems},'' \emph{{Automatica}}, vol.~80, pp. 189--199,
  2017.

\bibitem{BuissonFenet.2023b}
M.~Buisson-Fenet, V.~Morgenthaler, S.~Trimpe, and F.~{Di Meglio},
  ``{Recognition Models to Learn Dynamics from Partial Observations with Neural
  ODEs},'' \emph{{Trans. on Machine Learning Research}}, 2023.

\bibitem{Solin.2020}
A.~Solin and S.~S{\"a}rkk{\"a}, ``{Hilbert space methods for reduced-rank
  Gaussian process regression},'' \emph{{Statistics and Computing}}, vol.~30,
  no.~2, pp. 419--446, 2020.

\bibitem{Andrieu.2010}
C.~Andrieu, A.~Doucet, and R.~Holenstein, ``{Particle Markov Chain Monte Carlo
  Methods},'' \emph{{Journal of the Royal Statistical Society}}, vol.~72,
  no.~3, pp. 269--342, 2010.

\bibitem{Pacejka.1992}
H.~B. Pacejka and E.~Bakker, ``{The Magic Formula Tyre Model},'' \emph{{Vehicle
  System Dynamics}}, vol.~21, no. sup001, pp. 1--18, 1992.

\bibitem{Kullberg.2021c}
A.~Kullberg, I.~Skog, and G.~Hendeby, ``{Online Joint State Inference and
  Learning of Partially Unknown State-Space Models},'' \emph{{IEEE Trans.
  Signal Process.}}, vol.~69, pp. 4149--4161, 2021.

\bibitem{Menzen.2023}
C.~Menzen, E.~Memmel, K.~Batselier, and M.~Kok, ``{Projecting basis functions
  with tensor networks for Gaussian process regression},''
  \emph{{IFAC-PapersOnLine}}, vol.~56, no.~2, pp. 7288--7293, 2023.

\bibitem{Schon.2021}
T.~B. Sch{\"o}n, ``{Nonlinear System Identification Using Particle Filters},''
  in \emph{{Encyclopedia of Systems and Control}}, J.~Baillieul and T.~Samad,
  Eds.\hskip 1em plus 0.5em minus 0.4em\relax Cham: {Springer International
  Publishing}, 2021, pp. 1483--1492.

\bibitem{Wigren.2022}
A.~Wigren, J.~Wagberg, F.~Lindsten, A.~G. Wills, and T.~B. Sch{\"o}n,
  ``{Nonlinear System Identification: Learning While Respecting Physical Models
  Using a Sequential Monte Carlo Method},'' \emph{{IEEE Control Systems}},
  vol.~42, no.~1, pp. 75--102, 2022.

\bibitem{Ozkan.2013}
E.~{\"O}zkan, V.~{\v{S}}m{\'i}dl, S.~Saha, C.~Lundquist, and F.~Gustafsson,
  ``{Marginalized adaptive particle filtering for nonlinear models with unknown
  time-varying noise parameters},'' \emph{{Automatica}}, vol.~49, no.~6, pp.
  1566--1575, 2013.

\bibitem{Sarkka.2023}
S.~S{\"a}rkk{\"a} and L.~Svensson, \emph{{Bayesian filtering and smoothing}},
  2nd~ed.\hskip 1em plus 0.5em minus 0.4em\relax New York: {Cambridge Univ.
  Press}, 2023.

\end{thebibliography}
	
	\newpage
	\section*{APPENDIX}
	\subsection{Proof of Lemma \ref{lem:orthogonality}}\label{proof:orthogonality}
	For notational convenience, we omit the discrete-time index $k$ in this proof. To show the orthogonality of the basis functions $\{ \rho_i (\bi{x}) \}_{i=1}^M$, we start by inserting the definitions of the basis functions in \eqref{eq:othogonality}, noting that the functions are real-valued and expanding the products, which yields
	\begin{align}
		\begin{aligned}
			&\langle \rho_i, \rho_j \rangle 
			= \int_{\Omega} \rho_i(\bi{x}) {\rho}_j(\bi{x}) \mathrm{d}\bi{x}\\
			&= \int_{\Omega} \left(\bi{z}_i\tr\bi{\phi}(\bi{x}) \right) \left( \bi{z}_j\tr\bi{\phi}(\bi{x}) \right) \mathrm{d}\bi{x}. \\
		\end{aligned}
	\end{align}
	Expanding the product into two sums and rearranging gives
	\begin{align}
		\begin{aligned}
			&\langle \rho_i, \rho_j \rangle 
			= \int_{\Omega} \rho_i(\bi{x}) {\rho}_j(\bi{x}) \mathrm{d}\bi{x}\\
			&= \int_{\Omega} \left(\bi{z}_i\tr\bi{\phi}(\bi{x}) \right) \left( \bi{z}_j\tr\bi{\phi}(\bi{x}) \right) \mathrm{d}\bi{x} \\
			&= \int_{\Omega} \left(\sum_{n_1=1}^N z_{i n_1} \phi_{n_1}(\bi{x}) \right) \left( \sum_{n_2=1}^N z_{j n_2} \phi_{n_2}(\bi{x}) \right) \mathrm{d}\bi{x}\\
			&= \int_{\Omega} \sum_{n_1=1}^N  \sum_{n_2=1}^N z_{i n_1} z_{j n_2} \phi_{n_1}(\bi{x})   \phi_{n_2}(\bi{x}) \mathrm{d}\bi{x}.
		\end{aligned}
	\end{align}
	The order of the integral(s) and the sums can be changed because the sums are uniformly convergent. This can be seen, if we reorder the summands by decreasing order of $z_{i n_1} z_{j n_2}$ and extend the finite sums to an infinite series by adding zeros for $n_1 n_2 > N^2$. In this case, uniform convergence is provided by Dirichlet's test for uniform convergence. 
	\begin{align}\label{eq:dirichlet}
		\begin{aligned}
			&\langle \rho_i, \rho_j \rangle 
			= \sum_{n_1=1}^N  \sum_{n_2=1}^N z_{i n_1} z_{j n_2} \int_{\Omega}  \phi_{n_1}(\bi{x})   \phi_{n_2}(\bi{x}) \mathrm{d}\bi{x}\\
			&= \sum_{n_1=1}^N  \sum_{n_2=1}^N z_{i n_1} z_{j n_2} \delta_{n_1 n_2}
			= \bi{z}_i\tr \bi{z}_j = \delta_{ij},\\
		\end{aligned}
	\end{align}
	because the vectors $\bi{z}_i$ and $\bi{z}_j$ are columns of $\bi{Z}$ and form an orthonormal basis.
	
	\subsection{Proof of Theorem \ref{theo:approximation_accuracy}}\label{app:approximation_accuracy}
	For notational convenience, we omit the index $k$ in this proof and define
	\begin{align}
		f^{(N)}& \triangleq {\hat{\Xi}}^{(N)}(\bi{x}), \quad
		f^{(M)} \triangleq {\hat{\Xi}}^{(M)}(\bi{x}).
	\end{align}
	The squared L$2$ norm of the distance function
	\begin{align}\label{eq:error}
		\begin{aligned}
			&{\norm{d(\bi{x})}}^2
			= \int_{\Omega} \left(f^{(N)}-f^{(M)}\right) \left(f^{(N)}-f^{(M)}\right) \mathrm{d}\bi{x}\\
			&= \underbrace{\int_{\Omega} f^{(N)}f^{(N)} \mathrm{d}\bi{x}}_{\text{Term 1}} -2 \underbrace{\int_{\Omega} f^{(N)}f^{(M)} \mathrm{d}\bi{x}}_{\text{Term 2}} + \underbrace{ \int_{\Omega} f^{(M)}f^{(M)} \mathrm{d}\bi{x}}_{\text{Term 3}}.
		\end{aligned}
	\end{align}
	The integral is decomposed for the three terms and each is considered separately. Following the same argumentation as in \eqref{eq:dirichlet}, term 1 yields
	\begin{align}
		\begin{aligned}
			&\int_{\Omega} f^{(N)}f^{(N)} \mathrm{d}\bi{x}\\
			&= \int_{\Omega} \left(\bi{w}\tr \bi{\phi}(\bi{x}) \right) \left( \bi{w}\tr \bi{\phi}(\bi{x}) \right) \mathrm{d}\bi{x} \\
			&= \sum_{n_1=1}^N \sum_{n_2=1}^N w_{n_1} w_{n_2} \int_{\Omega}  \phi_{n_1}(\bi{x})   \phi_{n_2}(\bi{x}) \mathrm{d}\bi{x}\\
			&= \sum_{n_1=1}^N \sum_{n_2=1}^N w_{n_1} w_{n_2} \delta_{n_1 n_2}
			= \bi{w}\tr \bi{w}.
		\end{aligned}
	\end{align}
	Similarly, using the argumentation of the proof of Lemma \ref{lem:orthogonality}, term 3 gives
	\begin{align}
		\begin{aligned}
			&\int_{\Omega} f^{(M)}f^{(M)} \mathrm{d}\bi{x} = \bi{v}\tr \bi{v}.
		\end{aligned}
	\end{align}
	Term 2
	\begin{align}
		\begin{aligned}
			&\int_{\Omega} f^{(N)}f^{(M)} \mathrm{d}\bi{x}\\
			&= \int_{\Omega} \left(\bi{w}\tr \bi{\phi}(\bi{x}) \right) \left( \bi{v}\tr \bi{\rho}(\bi{x}) \right) \mathrm{d}\bi{x} \\
			&=  \sum_{m=1}^M \sum_{n=1}^N  v_{m} w_{n} \int_{\Omega}  \phi_{n}(\bi{x})   \rho_{m}(\bi{x}) \mathrm{d}\bi{x},
		\end{aligned}
	\end{align}
	with $\rho_{m}(\bi{x})$ given by \eqref{eq:rho}. Using the definition, we obtain
	\begin{align}
		\begin{aligned}
			&\int_{\Omega} f^{(N)}f^{(M)} \mathrm{d}\bi{x}\\
			&= \sum_{m=1}^M \sum_{n_1=1}^N   v_{m} w_{n_1} \int_{\Omega}  \phi_{n_1}(\bi{x})   \sum_{n_2=1}^N z_{n_2m} \phi_{n_2}(\bi{x})  \mathrm{d}\bi{x}\\
			&= \sum_{m=1}^M \sum_{n_1=1}^N   v_{m} w_{n_1} \int_{\Omega}  \sum_{n_2=1}^N z_{n_2m} \phi_{n_1}(\bi{x}) \phi_{n_2}(\bi{x})  \mathrm{d}\bi{x}\\
			&= \sum_{m=1}^M \sum_{n_1=1}^N   v_{m} w_{n_1}   \sum_{n_2=1}^N z_{n_2m} \int_{\Omega} \phi_{n_1}(\bi{x}) \phi_{n_2}(\bi{x})  \mathrm{d}\bi{x}\\
			&= \sum_{m=1}^M \sum_{n_1=1}^N   v_{m} w_{n_1}   \sum_{n_2=1}^N z_{n_2m} \delta_{n_1 n_2}\\
			&= \sum_{m=1}^M \sum_{n=1}^N   v_{m} w_{n}  z_{nm}
			= \sum_{m=1}^M   v_{m}   \bi{z}_m\tr \bi{w}
		\end{aligned}
	\end{align}
	where, in the second step, a similar argumentation as in the proof of Lemma \ref{lem:orthogonality} is considered for changing the order of the sum and the integral. 
	Plugging the terms back into \eqref{eq:error} gives
	\begin{align}
		\begin{aligned}
			&{\norm{d(\bi{x})}}^2
			= \bi{w}\tr \bi{w} - 2 \sum_{m=1}^M   v_{m}   \bi{z}_m\tr \bi{w} + \bi{v}\tr \bi{v}\\
			&= \bi{w}\tr \bi{w} - 2 \sum_{m=1}^M   v_{m}   \bi{z}_m\tr \bi{w} + \sum_{m=1}^M   v_{m}   \bi{z}_m\tr \bi{z}_m v_{m}\\
			&= {\norm{\bi{w} - \sum_{m=1}^M   v_{m}   \bi{z}_m}}^2.
		\end{aligned}
	\end{align}
\end{document}